\theoremstyle{plain}
    \newtheorem{thm}{Theorem}[section]
    \newtheorem{theorem}[thm]{Theorem}
    \newtheorem{corollary}[thm]{Corollary}
\theoremstyle{definition}
    \newtheorem{definition}[thm]{Definition}
    \newtheorem{example}[thm]{Example}
\newcommand{\relstr}[1]{\mathbb{#1}}
\newcommand{\arity}[1]{\mathrm{ar}(#1)}
\newcommand{\mc}[1]{\mathbf{#1}}
\newcommand{\CSP}{\mathrm{CSP}}
\newcommand{\maj}{\mathrm{maj}}
\newcommand{\PCSP}{\mathrm{PCSP}}
\newcommand{\Pol}{\mathrm{Pol}}
\newcommand{\vc}[1]{\mathbf{#1}}
\newcommand{\clone}[1]{\mathcal{#1}}
\newcommand{\class}[1]{\mathscr{#1}}
\newcommand{\obj}[1]{\mathfrak{#1}}
\newcommand{\struct}[1]{\mathfrak{#1}}
\begin{document}
\title{Algebraic Theory of Promise Constraint Satisfaction Problems, First Steps}
%
%
\author{Libor Barto
\thanks{Libor Barto has received funding from the European Research Council
(ERC) under the European Unions Horizon 2020 research and
innovation programme (grant agreement No 771005)}
%
\\ Charles University \\ Faculty of Mathematics and Physics \\ Department of Algebra \\ Sokolovsk\'a 83, 18675 Praha 8, Czechia\\
email: libor.barto@gmail.com}
\date{May 31, 2019}

\maketitle              
\begin{abstract}
What makes a computational problem easy (e.g., in P, that is, solvable in polynomial time) or hard (e.g., NP-hard)? This fundamental question now has a satisfactory answer for a quite
broad class of computational problems, so called fixed-template constraint satisfaction
problems (CSPs) -- it has turned out that their complexity is captured by a certain specific
form of symmetry. This paper explains an extension of this theory to a much broader class of
computational problems, the promise CSPs, which includes relaxed versions of CSPs such as the problem of finding a
137-coloring of a 3-colorable graph.

\end{abstract}

\section{Introduction}

In Computational Complexity we often try to place a given computational problem into some familiar complexity class, such as P, NP-complete, etc. In other words, we try to determine the image of a computational problem under the following mapping $\Phi$.
\begin{align*}
\Phi: \mbox{computational problems} &\to \mbox{complexity classes} \\
  \mbox{problem} &\mapsto \mbox{its complexity class}
\end{align*}
When we try to achieve this goal for a whole class of computational problems, say $\class{S}$, it is a natural idea to look for some intermediate collection $\obj{I}$ of ``invariants'' and a decomposition of $\Phi$ through $\obj{I}$:
$$
\class{S}  \stackrel{\Psi}{\rightarrow} \obj{I} \rightarrow \mbox{complexity classes}
$$
Members of $\obj{I}$ are then objects that exactly capture the computational complexity of problems in $\class{S}$. The larger $\class{S}$ is and the more objects $\Psi$ glues together, the better such a decomposition is.

This idea proved greatly useful for an interesting class of problems, so called fixed-template constraint satisfaction problems (CSPs), and eventually led to a full complexity classification result~\cite{Bul17,Zhu17}. In a decomposition, suggested in~\cite{FV98} and proved in~\cite{Jea98}, $\Psi$ assigns to a CSP a certain algebraic object that describes, informally, the high dimensional symmetries of the CSP. This basic insight of the so called \emph{algebraic approach to CSPs} was later twice improved~\cite{BJK05,BOP18}, giving us a chain
$$
\mbox{CSPs}  \stackrel{\Psi}{\rightarrow} \obj{I}_1 {\rightarrow} \obj{I}_2 {\rightarrow} \obj{I}_3 \rightarrow \mbox{complexity classes}.
$$

The basics of the algebraic theory can be adapted and applied in various generalizations and variants of the fixed-template CSPs, see surveys in~\cite{KZ17}. One particularly exciting direction is a recently proposed significant generalization of CSPs, so called promise CPSs (PCSPs)~\cite{AGH17,BG18}.  
This framework is substantially richer, both on the algorithmic and the hardness side, and a full complexity classification is wide open even in very restricted subclasses. On the other hand, the algebraic basics can be generalized from CSP to PCSP and, moreover, one of the results in~\cite{BKO19} not only gives such a generalization but also provides an additional insight and simplifies the algebraic theory of CSPs. 

The aim of this paper is to explain this result (here Theorem~\ref{thm:forth}) and the development in CSPs leading to it (Theorems~\ref{thm:first}, \ref{thm:second}, \ref{thm:third}). The most recent material comes from the conference papers~\cite{BKO19} and \cite{Bar19}, which will be merged and expanded in~\cite{BBKO}. Very little preliminary knowledge is assumed but an interested reader may find an in depth introduction to the fixed-template CSP and its variants in~\cite{KZ17}.

\section{CSP} \label{sec:CSP}

Fur the purposes of this paper, we define
a \emph{finite relational structure} as a tuple $\relstr{A} = (A; R_1, \dots, R_n)$, where $A$ is a finite set, called the \emph{domain} of $\relstr{A}$, and each $R_i$ is a relation on $A$ of some arity, that is, $R_i \subseteq A^{\arity{R_i}}$ where $\arity{R_i}$ is a natural number. 

A \emph{primitive positive formula (pp-formula)} over $\relstr{A}$ is a formula that uses only existential quantification, conjunction, relations in $\relstr{A}$, and the equality relation. We will work only with formulas in a prenex normal form. 

\begin{definition} \label{def:CSP}
Fix a finite relational structure $\relstr{A}$. The CSP over $\relstr{A}$, written $\CSP(\relstr{A})$, is the problem of deciding whether a given pp-sentence over $\relstr{A}$ is true.

In this context, $\relstr{A}$ is called the \emph{template} for $\CSP(\relstr{A})$.  
\end{definition}

For example, if $\relstr{A} = (A; R,S)$ and both $R$ and $S$ are binary, then an input to $\CSP(\relstr{A})$ is, e.g.,
\[
(\exists x_1 \exists x_2 \dots \exists x_5) \ R(x_1,x_3) \wedge S(x_5,x_2) \wedge R(x_3,x_3).
\]
This sentence is true if there exists a \emph{satisfying assignment}, that is, a mapping $f: \{x_1, \dots, x_5\} \to A$ such that $(f(x_1),f(x_3)) \in R$, $(f(x_5),f(x_2)) \in S$, and $(f(x_3),f(x_3)) \in R$. Each conjunct thus can be thought of as a constraint limiting $f$ and the goal is to decide whether there is an assignment satisfying each constraint. 

Clearly, $\CSP(\relstr{A})$ is always in NP.

The CSP over $\relstr{A}$ can be also defined as a search problem where the goal is to find a satisfying assignment when it exists. It has turned out that the search problem is no harder then the decision problem presented in Definition~\ref{def:CSP}~\cite{BJK05}.

\subsection{Examples}

Typical problems covered by the fixed-template CSP framework are satisfiability problems, (hyper)graph coloring problems, and equation solvability problems. Let us look at several examples. We use here the notation
$$
E_k = \{0,1, \dots,k-1\}.
$$

\begin{example} \label{ex:sat}
Let $3\relstr{SAT} = (E_2; R_{000}, R_{001}, \dots, R_{111})$, where
$$
	R_{abc} = E_2^3 \setminus \{(a,b,c)\} \ \mbox{ for all } a,b,c \in \{0,1\}.
$$
An input to $\CSP(3\relstr{SAT})$ is, e.g., 
$$
(\exists x_1 \exists x_2 \dots) \ S_{001}(x_1,x_4,x_2) \wedge S_{110}(x_2,x_5,x_5) \wedge S_{000}(x_2,x_1,x_3) \wedge \dots.
$$
Observe that this sentence is true if and only if the propositional formula
$$
(x_1 \vee x_4 \vee \neg x_2) \wedge (\neg x_2 \vee \neg x_5 \vee x_5) \wedge (x_2 \vee x_1 \vee x_3) \wedge \dots
$$ 
is satisfiable. Therefore $\CSP(3\relstr{SAT})$ is essentially the same as the 3SAT problem, a well known NP-complete problem.

On the other hand, recall that the 2SAT problem, which is the CSP over $2\relstr{SAT} = (E_2; R_{00}, R_{01}, R_{10}, R_{11})$, where $R_{ab} = E_2^2 \setminus \{(a,b)\}$, is in P. 
\end{example}

\begin{example} \label{ex:col}
Let $\relstr{K}_3 = (E_3; N_3)$, where
 $N_3$ is the binary inequality relation, i.e., 
$$N_3 = \{(a,b) \in E_3^2: a \neq b\}.$$
An input to $\CSP(\relstr{K}_3)$ is, e.g.,
$$
(\exists x_1 \dots \exists x_5) \ N_3(x_1, x_2) \wedge N_3(x_1, x_3) \wedge N_3(x_1, x_4) \wedge N_3(x_2,x_3) \wedge N_3(x_2,x_4).
$$
Here an input can be drawn as a graph -- its vertices are the variables and vertices $x,y$ are declared adjacent iff the input contains a conjunct $N_3(x,y)$ or $N_3(y,x)$. For example, the graph associated to the input above is the five vertex graph obtained by merging two triangles along an edge. Clearly, an input sentence is true if and only if the vertices of the associated graph can be colored by colors 0, 1, and 2 so that adjacent vertices receive different colors. Therefore $\CSP(\relstr{K}_3)$ is essentially the same as the 3-coloring problem for graphs, another well known NP-complete problem. 

More generally, $\CSP(\relstr{K}_k) = (E_k,N_k)$, where $N_k$ is the inequality relation on $E_k$, is NP-complete for $k \geq 3$ and in P for $k=2$. 
\end{example}

\begin{example} \label{ex:nae}
Let $3\relstr{NAE}_k = (E_k; 3NAE_k)$, where
 $3NAE_k$ is the ternary not-all-equal relation, i.e., 
$$
3NAE_k = E_k^3 \setminus \{(a,a,a): a \in E_k\}.
$$
Taking the viewpoint of Example~\ref{ex:sat}, the CSP over $3\relstr{NAE}_2$ is the positive not-all-equal 3SAT, 
where one is given a 3SAT formula without negations and the aim is to decide whether there is an assignment such that, in every clause, not all variables get the same value. This problem is NP-complete~\cite{Sch78}. 

From the graph theoretical viewpoint, $\CSP(3\relstr{NAE}_k)$ is the problem of deciding whether a given 3-uniform hypergraph%
\footnote{Here we should rather say a hypergraph whose hyperedges have size at most 3 because of conjuncts of the form $3NAE_k(x,x,y)$ or $3NAE_k(x,x,x)$. Let us ignore this minor technical imprecision.} 
admits a coloring by $k$ colors so that no hyperedge is monochromatic. 
\end{example}

\begin{example} \label{ex:one}
Let $1\relstr{IN}3 = (E_2; 1IN3)$, where
$$
1IN3 = \{(1,0,0),(0,1,0),(1,0,0)\}.
$$
The CSP over $1\relstr{IN}3$ is the positive one-in-three SAT problem or, in other words, the problem of deciding whether a given 3-uniform hypergraph admits a coloring by colors $0$ and $1$ so that exactly one vertex in each hyperedge receives the color 1. This problem is, again, NP-complete~\cite{Sch78}. 
\end{example}

\begin{example} \label{ex:lin}
Let $3\relstr{LIN}_5 = (E_5; L_{0000}, L_{0001}, \dots, L_{4444})$, where
$$
L_{abcd} = \{(x,y,z) \in E_5^3: ax+by+cz=d \pmod 5\}.
$$
An input, such as
$$
(\exists x_1 \exists x_2 \dots) \ L_{1234}(x_3,x_4,x_2) \wedge L_{4321}(x_5,x_1,x_3) \wedge \dots
$$
can be written as a system of linear equations over the 5-element field $\mathbb{Z}_5$, such as
$$
1x_3 + 2x_4 + 3x_2 = 4, \ 4x_5 + 3x_1 + 2x_3 = 1, \ \dots,
$$
therefore $\CSP(3\relstr{LIN}_5)$ is essentially the same problem as deciding whether a system of linear equations over $\mathbb{Z}_5$ (with each equation containing 3 variables) has a solution. This problem is in P. 
\end{example}

\subsection{1st step: polymorphisms}

The crucial concept for the algebraic approach to the CSP is a polymorphism, which is a homomorphism from a cartesian power of a structure to the structure:

\begin{definition}
Let $\relstr{A} = (A; R_1, \dots, R_n)$ be a relational structure. A $k$-ary (total) function $f: A^k \to A$ is a \emph{polymorphism} of $\relstr{A}$ if it is compatible with every relation $R_i$, that is, for all tuples $\vc{r}_1, \dots, \vc{r}_k \in R_i$, the tuple $f(\vc{r}_1, \dots, \vc{r}_k)$ (where $f$ is applied component-wise) is in $R_i$. 

By $\Pol(\relstr{A})$ we denote the set of all polymorphisms of $\relstr{A}$. 
\end{definition}

The compatibility condition is often stated as follows: for any $(\arity{R_i} \times k)$-matrix whose column vectors are in $R_i$, the vector obtained by applying $f$ to its rows is in $R_i$ as well. 

Note that the unary polymorphisms of $\relstr{A}$ are exactly the endomorphisms of $\relstr{A}$. One often thinks of endomorphisms (or just automorphisms) as symmetries of the structure. In this sense, polymorphisms can be thought of as higher dimensional symmetries. 

For any domain $A$ and any $i \leq k$, the $k$-ary projection to the $i$-th coordinate, that is, the function $\pi^k_i: A^k \to A$ defined by
$$
\pi^k_i (x_1, \dots, x_n) = x_i,
$$
is a polymorphism of every structure with domain $A$. These are the \emph{trivial} polymorphisms. 
The following examples show some nontrivial polymorphisms.

\begin{example} \label{ex:maj}
Consider the template $2\relstr{SAT}$ from Example~\ref{ex:sat}.
It is easy to verify that the ternary majority function $\maj: E_2^3 \to E_2$ given by
$$
\maj(x,x,y) = \maj(x,y,x) = \maj(y,x,x) = x \quad \mbox{for all $x,y \in E_2$}
$$
is a polymorphism of $2\relstr{SAT}$.

In fact, whenever a relation $R \subseteq E_2^m$ is compatible with $\maj$, it can be pp-defined (that is, defined by a pp-formula) 
from relations in $2\relstr{SAT}$ (see e.g.~\cite{JCG97}). Now for any template $\relstr{A} = (E_2; R_1, \dots, R_n)$ with polymorphism $\maj$, an input of $\CSP(\relstr{A})$ can be easily rewritten to an equivalent input of $\CSP(2\relstr{SAT})$ and therefore $\CSP(\relstr{A})$ is in P.
\end{example}

\begin{example}
Consider the template $3\relstr{LIN}_5$ from Example~\ref{ex:lin}.
Each relation in this structure is an affine subspace of $\mathbb{Z}_5^3$. Every affine subspace is closed under affine combinations, therefore, for every $k$ and every $t_1, \dots, t_k \in E_5$ such that $t_1 + \dots + t_k=1 \pmod{5}$, the $k$-ary function $f_{t_1, \dots, t_k}: E_5^k \to E_5$ defined by
$$
f_{t_1, \dots, t_k}(x_1, \dots, x_k) = t_1x_1 + \dots, t_kx_k \pmod{5}
$$
is a polymorphism of $3\relstr{LIN}_5$. 

Conversely, every subset of $A^m$ closed under affine combinations is an affine subspace of $\mathbb{Z}_5^m$. It follows that if every $f_{t_1, \dots, t_k}$ is a polymorphism of  $\relstr{A} = (E_5; R_1, \dots, R_n)$, then inputs to $\CSP(\relstr{A})$ can be rewritten to systems of linear equations over $\mathbb{Z}_5$ and thus $\CSP(\relstr{A})$ is in P.  
\end{example}

The above examples also illustrate that polymorphisms influence the computational complexity. 
The first step of the algebraic approach was to realize that this is by no means a coincidence.

\begin{theorem}[\cite{Jea98}] \label{thm:first}
The complexity of $\CSP(\relstr{A})$ depends only on $\Pol(\relstr{A})$.

More precisely, if $\relstr{A}$ and $\relstr{B}$ are finite relational structures and $\Pol(\relstr{A}) \subseteq \Pol(\relstr{B})$, then $\CSP(\relstr{B})$ is (log-space) reducible to $\CSP(\relstr{A})$. In particular, if $\Pol(\relstr{A})=\Pol(\relstr{B})$, then $\CSP(\relstr{A})$ and $\CSP(\relstr{B})$ have the same complexity.
\end{theorem}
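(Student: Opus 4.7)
The plan is to route the proof through the classical Galois correspondence between relations and polymorphisms: a relation $R$ on $A$ is preserved by every member of $\Pol(\relstr{A})$ if and only if $R$ is pp-definable from $\relstr{A}$. Granting this, if $\Pol(\relstr{A}) \subseteq \Pol(\relstr{B})$ then every relation of $\relstr{B}$ is preserved by all of $\Pol(\relstr{A})$, so each relation of $\relstr{B}$ admits a pp-definition over $\relstr{A}$. The log-space reduction then writes itself: given an input pp-sentence over $\relstr{B}$, replace every atomic formula $R_i(\vc{x})$ by its fixed pp-definition $\varphi_{R_i}(\vc{x})$ over $\relstr{A}$ (renaming the bound variables of $\varphi_{R_i}$ to fresh names in each occurrence), and pull the existential quantifiers to the front. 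The resulting sentence is a pp-sentence over $\relstr{A}$ equivalent to the original, and it is producible in log-space because the pp-definitions $\varphi_{R_i}$ are finitely many fixed strings depending only on $\relstr{B}$ and $\relstr{A}$.

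For the Galois direction that does the real work, the easier inclusion ``pp-definable $\Rightarrow$ preserved'' is a direct induction on the construction of pp-formulas: projections, cartesian products, intersections, and equalities all clearly preserve the property of being compatible with a given function. The converse, ``preserved $\Rightarrow$ pp-definable,'' is the substantive step. The standard argument uses the indicator construction: to pp-define a relation $R \subseteq A^m$ that is invariant under $\Pol(\relstr{A})$, enumerate the tuples of $R$ as columns of an $m \times |R|$ matrix and, in the variables indexed by the rows, write down the pp-formula that asserts, for each basic relation $R_i$ of $\relstr{A}$ and each $(\arity{R_i} \times |R|)$-submatrix of columns in $R_i$, that the corresponding row-variables are constrained by $R_i$. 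One then checks that the solution set of this formula is exactly the smallest subset of $A^m$ containing the given columns and closed under all of $\Pol(\relstr{A})$, which by hypothesis coincides with $R$.

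Once the pp-definability of each $R_i^{\relstr{B}}$ over $\relstr{A}$ is in hand, the correctness of the reduction is routine: a satisfying assignment of the original sentence extends (via witnesses for the existentials introduced by the $\varphi_{R_i}$'s) to a satisfying assignment of the rewritten sentence and vice versa. The log-space bound is immediate because each replacement is a constant-size textual substitution and variable renaming can be done with a counter. The final statement of the theorem about equal polymorphism clones yielding the same complexity is then just the symmetric application of this reduction in both directions.

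The main obstacle, and the only nontrivial content, is the Galois theorem of Bodnarchuk--Kalu\v{z}nin--Kotov--Romov / Geiger; the design of the indicator pp-formula and the verification that its solution set is exactly the polymorphism-closure of the starting tuples is where care is needed. Everything else is bookkeeping.
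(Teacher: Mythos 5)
Your overall route is exactly the paper's: invoke the Galois correspondence of Geiger/Bodnarchuk--Kalu\v{z}nin--Kotov--Romov to conclude that $\Pol(\relstr{A}) \subseteq \Pol(\relstr{B})$ forces every relation of $\relstr{B}$ to be pp-definable over $\relstr{A}$, then reduce $\CSP(\relstr{B})$ to $\CSP(\relstr{A})$ by substituting the (finitely many, fixed) pp-definitions into the input sentence. The paper stops at citing the classical result; you go further and sketch its proof, and that sketch contains a genuine error in the one step you yourself identify as ``where care is needed.''

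The problem is your description of the indicator construction. You introduce variables indexed only by the $m$ rows of the $m \times |R|$ matrix and impose the constraints coming from row-selections whose columns lie in a basic relation $R_i$. The resulting formula is quantifier-free in those $m$ variables, so it can only define relations obtained from the $R_i$ (and equality) by conjunction and identification of coordinates --- it can never produce a relation that genuinely requires existential quantification, such as the two-step composition of a binary relation with itself. Concretely, take $\relstr{A} = \relstr{K}_2 = (E_2; E)$ with $E = \{(0,1),(1,0)\}$ and $R = \{(0,0),(1,1)\}$. Then $R$ is preserved by $\Pol(\relstr{K}_2)$ and its closure under that clone is $R$ itself, but the matrix with columns $(0,0)$ and $(1,1)$ has both rows equal to $(0,1)$, no row-selection produces columns lying in $E$, and your formula is the empty conjunction defining all of $E_2^2 \neq R$. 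The correct construction takes one variable $x_{\vc{a}}$ for \emph{every} tuple $\vc{a} \in A^{|R|}$ (i.e., every entry of the table of a prospective $|R|$-ary polymorphism), imposes $R_i(x_{\vc{a}_1},\dots,x_{\vc{a}_p})$ for every $\arity{R_i} \times |R|$ matrix over $A$ all of whose columns lie in $R_i$, and existentially quantifies away all variables except the $m$ ones indexed by the actual rows of the matrix of $R$; the solution set of the unquantified system is precisely the set of $|R|$-ary polymorphisms viewed as tuples, and the projection is the orbit $\{f(\vc{r}_1,\dots,\vc{r}_{|R|}) : f \in \Pol(\relstr{A})\}$, which equals $R$ by invariance. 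With that repair the rest of your argument (the log-space substitution and its correctness) is fine and matches the paper.
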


\begin{proof}[sketch]
If $\Pol(\relstr{A}) \subseteq \Pol(\relstr{B})$, then relations in $\relstr{B}$ can be pp-defined from relations in $\relstr{A}$ by a classical result in Universal Algebra~\cite{Gei68,BKKR69,BKKR69a}. This gives a reduction from $\CSP(\relstr{B})$ to $\CSP(\relstr{A})$.
\end{proof}

Theorem~\ref{thm:first} can be used as a tool for proving NP-hardness: when $\relstr{A}$ has only trivial polymorphism (and has domain of size at least two), any CSP on the same domain can be reduced to $\CSP(\relstr{A})$ and therefore $\CSP(\relstr{A})$ is NP-complete. This NP-hardness criterion is not perfect, e.g., $\CSP(3\relstr{NAE}_2)$ has a nontrivial endomorphism $x \mapsto 1-x$. 

\subsection{2nd step: strong Maltsev conditions}

Theorem~\ref{thm:first} shows that the set of polymorphisms determines the complexity of a CSP. What information do we really need to know about the  polymorphisms to determine the complexity? It has turned out that it is sufficient to know which functional equations they solve. 
In the following definition we use a standard universal algebraic term for a functional equation, a strong Maltsev condition.

\begin{definition}
A \emph{strong Maltsev condition} over a set of function symbols $\Sigma$ is a finite set of equations of the form $t=s$, where $t$ and $s$ are terms built from variables and symbols in $\Sigma$. 	

Let $\clone{M}$ be a set of functions on a common domain. A strong Maltsev condition $S$ is \emph{satisfied} in $\clone{M}$  if the function symbols of $\Sigma$ can be interpreted in $\clone{M}$ so that each equation in $S$ is satisfied for every choice of variables. 
\end{definition}

\begin{example} \label{ex:random_maltsev}
A strong Maltsev condition over $\Sigma = \{f,g,h\}$ (where $f$ and $g$ are binary symbols and $h$ is ternary) is, e.g.,
\begin{align*}
f(g(f(x,y),y),z) &= g(x,h(y,y,z)) \\
f(x,y) &= g(g(x,y),x).
\end{align*}
This condition is satisfied in the set of all projections (on any domain) since, by interpreting $f$ and $g$ as $\pi^2_1$ and $h$ as $\pi^3_1$, both equations are satisfied for every $x,y,z$ in the domain -- they are equal to $x$. 
\end{example}

The strong Maltsev condition in the above example is not interesting for us since it is satisfied in every $\Pol(\relstr{A})$.
Such conditions are called trivial:

\begin{definition}
A strong Maltsev condition is called \emph{trivial} if it is satisfied in the set of all projections on a two-element set (equivalently, it is satisfied in $\Pol(\relstr{A})$ for every $\relstr{A}$).
\end{definition}

Two nontrivial Maltsev condition are shown in the following example. 

\begin{example} \label{ex:maltsev}
The strong Maltsev condition (over a single ternary symbol $m$)
\begin{align*}
m(x,x,y) &= x \\
m(x,y,x) &= x \\
m(y,x,x) &= x
\end{align*}
is nontrivial since each of the possible interpretations $\pi^3_1$, $\pi^3_2$, $\pi^3_3$ of $m$ falsifies one of the equations.
This condition is satisfied in $\Pol(2\relstr{SAT})$ by interpreting $m$ as the majority function, see Example~\ref{ex:maj}.

The strong Maltsev condition
\begin{align*}
p(x,x,y) &= y \\
p(y,x,x) &= y
\end{align*}
is also nontrivial. It is satisfied in $\Pol(3\relstr{LIN}_5)$ by interpreting $p$ as $x+4y+z \pmod 5$.

In fact, if $\Pol(\relstr{A})$ satisfies one of the strong Maltsev conditions in this example, then $\CSP(\relstr{A})$ is in P (see e.g.~\cite{BKW17}).
\end{example}

The following theorem is (a restatement of) the second crucial step of the algebraic approach.

\begin{theorem}[\cite{BJK05}, see also \cite{Bod08}] \label{thm:second}
The complexity of $\CSP(\relstr{A})$ depends only on strong Maltsev conditions satisfied by $\Pol(\relstr{A})$.

More precisely, if $\relstr{A}$ and $\relstr{B}$ are finite relational structures and each strong Maltsev condition satisfied in $\Pol(\relstr{A})$ is satisfied in $\Pol(\relstr{B})$, then $\CSP(\relstr{B})$ is (log-space) reducible to $\CSP(\relstr{A})$. In particular, if $\Pol(\relstr{A})$ and $\Pol(\relstr{B})$ satisfy the same strong Maltsev conditions, then $\CSP(\relstr{A})$ and $\CSP(\relstr{B})$ have the same complexity.
\end{theorem}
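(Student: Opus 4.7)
The plan is to mimic the proof of Theorem~\ref{thm:first} but to replace pp-definability by the more flexible notion of \emph{primitive positive interpretation} (pp-interpretation): $\relstr{B}$ is pp-interpretable in $\relstr{A}$ if there exist a dimension $d$, a pp-definable set $D \subseteq A^d$, a pp-definable equivalence relation $\sim$ on $D$, and a surjection $\pi: D \to B$ whose fibers are the $\sim$-classes, such that the preimage of every relation of $\relstr{B}$ under the coordinatewise action of $\pi$ is pp-definable in $\relstr{A}$. As in Theorem~\ref{thm:first}, such an interpretation yields a log-space reduction $\CSP(\relstr{B}) \leq \CSP(\relstr{A})$: replace each variable of the input by a $d$-tuple of fresh $\relstr{A}$-variables, insert the defining pp-formulas for $D$ and $\sim$ (so that two copies of the same $\relstr{B}$-variable land in the same $\sim$-class), and expand every relational constraint by its pp-definition. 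It thus suffices to show that $\relstr{B}$ is pp-interpretable in $\relstr{A}$ under the Maltsev-condition hypothesis.

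The first substep is to extract from the hypothesis a \emph{clone homomorphism} $\xi: \Pol(\relstr{A}) \to \Pol(\relstr{B})$, i.e.\ an arity-preserving map that sends projections to projections and commutes with composition. One direction of this equivalence is immediate: composing any interpretation of a strong Maltsev condition inside $\Pol(\relstr{A})$ with $\xi$ produces an interpretation inside $\Pol(\relstr{B})$. For the converse I would use a standard compactness argument. For each finite $F \subseteq \Pol(\relstr{A})$, the set of all identities holding on $F$ forms a single strong Maltsev condition which, by hypothesis, is realised in $\Pol(\relstr{B})$; the candidate values of $\xi$ on arity-$k$ operations live in the finite set $B^{B^k}$, so K\"onig's lemma glues the compatible local realisations into a global $\xi$.

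The second, and main, substep is to promote $\xi$ to a concrete pp-interpretation. This is the Bulatov--Jeavons--Krokhin half of the Galois correspondence and is the principal obstacle: whereas Theorem~\ref{thm:first} needed only Geiger's invariant--polymorphism duality (pp-definable $=$ invariant under $\Pol$), here one must actually build a pp-definable subquotient of a power of $A$ that realises $\relstr{B}$. The idea is to take $d$ to be at least $|B|$, identify $B$ with a pp-definable equivalence class structure on $A^d$ by looking at tuples of the form $\xi(f)(\vec{a}_1, \dots, \vec{a}_k)$ for $f \in \Pol(\relstr{A})$ and fixed generating tuples $\vec{a}_i \in A^d$, and verify that preimages of $\relstr{B}$-relations, by coordinatewise compatibility with $\xi$, are invariant under $\Pol(\relstr{A})$ and hence pp-definable by the duality used in the proof of Theorem~\ref{thm:first}. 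Once this construction is in hand, the reduction follows by chaining the implications: hypothesis $\Rightarrow$ clone homomorphism $\Rightarrow$ pp-interpretation $\Rightarrow$ log-space reduction.
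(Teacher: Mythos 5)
Your proposal follows essentially the same route as the paper, whose sketch is exactly ``replace pp-definitions by pp-interpretations and use the Birkhoff HSP theorem on the algebraic side'': your second substep is the finite Birkhoff HSP theorem, exhibiting the algebra $(B;\xi(\Pol(\relstr{A})))$ as a quotient of a subalgebra of a finite power of $(A;\Pol(\relstr{A}))$, whose subuniverse, kernel, and relation preimages are $\Pol(\relstr{A})$-invariant and hence pp-definable by the same Galois correspondence used for Theorem~\ref{thm:first}. The one imprecision is in the parameters of that construction: for the quotient map $f(\vec{a}_1,\dots,\vec{a}_k)\mapsto\xi(f)(b_1,\dots,b_k)$ to be well defined you should take $d=|A|^{|B|}$ (not merely $d\ge|B|$) and let $\vec{a}_1,\dots,\vec{a}_{|B|}$ be the free generating tuples, i.e.\ the coordinate projections $A^{|B|}\to A$ written out as elements of $A^{d}$, so that two polymorphisms agree on them exactly when $\relstr{A}$ satisfies the corresponding identity, which the clone homomorphism then transports to $\relstr{B}$ (note also that the elements of $D$ are values of $f$ applied coordinatewise on $A^{d}$, not of $\xi(f)$, which acts on $B$).
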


\begin{proof}[sketch]
The proof can be done in a similar way as for Theorem~\ref{thm:first}. Instead of pp-definitions one uses more general constructions called pp-interpretations and, on the algebraic side, the Birkhoff HSP theorem~\cite{Birk35}.
\end{proof}

Theorem~\ref{thm:second} gives us an improved tool for proving NP-hardness: if $\Pol(\relstr{A})$ satisfies only trivial strong Maltsev conditions, then $\CSP(\relstr{A})$ is NP-hard. This criterion is better, e.g., it can be applied to $\CSP(3\relstr{NAE}_2)$, but still not perfect, e.g., it cannot be applied to the CSP over the disjoint union of two copies of $\relstr{K}_3$.

\subsection{3rd step: minor conditions}

Strong Maltsev conditions that appear naturally in the CSP theory or in Universal Algebra are often of an especially simple form, they involve no nesting of function symbols. The third step in the basics of the algebraic theory was to realize that this is also not a coincidence.

\begin{definition}
A strong Maltsev condition is called a \emph{minor condition} if each side of every equation contains exactly one occurrence of a function symbol. 
\end{definition}

In other words, each equation in a strong Maltsev condition is of the form ``symbol(variables) = symbol(variables)''.

\begin{example}
The condition in Example~\ref{ex:random_maltsev} is not a minor condition since, e.g., the left-hand side of the first equation involves three occurrences of function symbols.

The conditions in Example~\ref{ex:maltsev} are not minor conditions either since the right-hand sides do not contain any function symbol. However, these conditions have close friends which are minor conditions. For instance, the friend of the second system is the minor condition
\begin{align*}
p(x,x,y) &= p(y,y,y) \\
p(y,x,x) &= p(y,y,y).
\end{align*}
Note that this system is also satisfied in $\Pol(3\relstr{LIN}_5)$  by the same interpretation as in Example~\ref{ex:maltsev}, that is, $x+4y+z \pmod{5}$.
\end{example}

The following theorem is a strengthening of Theorem~\ref{thm:second}. We give only the informal part of the statement, the precise formulation is analogous to Theorem~\ref{thm:second}. 

\begin{theorem}[\cite{BOP18}] \label{thm:third}
The complexity of $\CSP(\relstr{A})$ (for finite $\relstr{A}$) depends only on minor conditions satisfied by $\Pol(\relstr{A})$.
\end{theorem}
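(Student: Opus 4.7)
The plan is to follow the same template as the proofs of Theorems~\ref{thm:first} and \ref{thm:second}, but with pp-interpretations replaced by the strictly finer notion of \emph{pp-constructions}, and Birkhoff's HSP theorem replaced by a Galois-style correspondence tailored to minor identities. Concretely, I would introduce the relation ``$\relstr{B}$ is pp-constructible from $\relstr{A}$'', meaning that $\relstr{B}$ is homomorphically equivalent to a structure pp-interpretable in $\relstr{A}$. Adding homomorphic equivalence on top of pp-interpretability is the key extra ingredient over Theorem~\ref{thm:second}. The first step is to check that pp-constructibility still yields a log-space reduction between the CSPs; this combines the reduction already used for Theorem~\ref{thm:second} with the trivial fact that homomorphically equivalent structures have the same CSP.

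The main step is the algebraic characterization: $\relstr{B}$ is pp-constructible from $\relstr{A}$ if and only if there is a map $\xi\colon \Pol(\relstr{A}) \to \Pol(\relstr{B})$ that preserves arities and commutes with taking minors (i.e., with identifying and permuting variables), what one calls a \emph{minor-preserving map} or minion homomorphism. Granting this equivalence, the theorem drops out: if every minor condition satisfied in $\Pol(\relstr{A})$ is also satisfied in $\Pol(\relstr{B})$, then for each finite tuple of polymorphisms of $\relstr{A}$ we can choose polymorphisms of $\relstr{B}$ realizing the same minor identities; a compactness argument over all finite subsets (since each $\Pol(\relstr{B})$ of fixed arity is finite) assembles these local choices into a global minor-preserving map, and pp-constructibility and the reduction then follow.

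The hard part is the ``only if'' direction of the Galois correspondence, that is, passing from a minor-preserving map back to an actual pp-construction. In Theorem~\ref{thm:second} the bridge was Birkhoff's HSP theorem, which used products, subalgebras, and homomorphic images; here we are forbidden to use nested terms, so in algebraic terms we must avoid products and subalgebras and survive with only homomorphic images and variable manipulations. The approach is to build the standard ``indicator'' or ``free'' structure on the variable set $A^n$ (for large enough $n$), whose polymorphisms encode precisely the $n$-generated free object in the minion sense, and then use the minor-preserving map $\xi$ to push this structure into $\relstr{A}$ and read off a pp-interpretation. The crucial point is that the resulting interpreted structure may not be isomorphic to $\relstr{B}$, but only homomorphically equivalent to it; this final quotient step is exactly what absorbs the lack of nesting, since collapsing via a surjective homomorphism kills the composed identities that a strong Maltsev condition would otherwise demand. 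Verifying that this quotient step is always possible, and does not require any further closure operator, is where I expect the real work to be.
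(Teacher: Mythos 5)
Your proposal follows essentially the same route as the paper's (one-sentence) sketch: generalize pp-interpretations to pp-constructions by adding homomorphic equivalence, replace Birkhoff's HSP theorem by the characterization of pp-constructibility via minor-preserving maps between polymorphism minions, and use a compactness argument to pass from preservation of all minor conditions to a single such map. The only slip is in your aside that the algebraic side must ``avoid products and subalgebras'': the generalized Birkhoff theorem of \cite{BOP18} retains arbitrary powers and replaces homomorphic images and subalgebras jointly by \emph{reflections} (which is what homomorphic equivalence realizes on the relational side), and your actual construction --- the indicator structure on $A^n$ pushed through $\xi$ --- is consistent with this.
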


\begin{proof}[sketch]
The proof again follows the same pattern by further generalizing pp-interpretations (to so called pp-constructions) and the Birkhoff HSP theorem.
\end{proof}

\subsection{Classification}

Just like Theorems~\ref{thm:first} and \ref{thm:second} give hardness criteria for CSPs, we get an improved sufficient condition for NP-hardness as a corollary of Theorem~\ref{thm:third}.

\begin{corollary}
Let $\relstr{A}$ be a finite relational structure which satisfies only trivial minor conditions. Then $\CSP(\relstr{A})$ is NP-complete.
\end{corollary}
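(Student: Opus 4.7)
The plan is to combine Theorem~\ref{thm:third} with the existence of a known NP-complete CSP whose polymorphisms are already essentially projections. Membership in NP is free: as noted right after Definition~\ref{def:CSP}, every $\CSP(\relstr{A})$ is in NP, so only NP-hardness requires work.

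For the hardness direction, I would first unpack the hypothesis. By definition, a trivial minor condition is one satisfied by $\Pol(\relstr{B})$ for every finite relational structure $\relstr{B}$. So if $\Pol(\relstr{A})$ satisfies \emph{only} trivial minor conditions, then for any finite $\relstr{B}$, every minor condition satisfied in $\Pol(\relstr{A})$ is automatically also satisfied in $\Pol(\relstr{B})$. By the precise (Theorem~\ref{thm:second}-style) formulation of Theorem~\ref{thm:third}, this yields a log-space reduction $\CSP(\relstr{B}) \leq \CSP(\relstr{A})$ for \emph{every} finite relational structure $\relstr{B}$.

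It now suffices to point at any fixed $\relstr{B}$ with $\CSP(\relstr{B})$ NP-complete; Examples~\ref{ex:sat}, \ref{ex:col}, \ref{ex:nae}, \ref{ex:one} supply plenty of choices, e.g.\ $\relstr{B} = 3\relstr{SAT}$ or $\relstr{B} = \relstr{K}_3$. Chaining the reduction $\CSP(\relstr{B}) \leq \CSP(\relstr{A})$ with the known NP-hardness of $\CSP(\relstr{B})$ gives NP-hardness of $\CSP(\relstr{A})$, which together with containment in NP yields NP-completeness.

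There is no real obstacle to overcome once Theorem~\ref{thm:third} is in hand; the entire argument is a one-line application. The only conceptual point worth stressing is that ``only trivial minor conditions'' is precisely the weakest possible hypothesis on $\Pol(\relstr{A})$: it places $\Pol(\relstr{A})$ at the bottom of the hierarchy detected by the invariant used in Theorem~\ref{thm:third}, so $\CSP(\relstr{A})$ inherits the hardness of every other fixed-template CSP, in particular the NP-hard ones.
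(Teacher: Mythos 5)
Your proposal is correct and matches the paper's intent exactly: the corollary is stated there as an immediate consequence of Theorem~\ref{thm:third} in the same spirit as the hardness criteria drawn from Theorems~\ref{thm:first} and \ref{thm:second}, namely that triviality of all satisfied minor conditions yields a reduction from every $\CSP(\relstr{B})$, in particular an NP-hard one. No further review is needed.
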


Bulatov, Jeavons, and Krokhin have conjectured~\cite{BJK05} that satisfying only trivial minor conditions is actually the only reason for hardness%
\footnote{Their conjecture is equivalent but was, of course, originally stated in a different language -- the significance of minor conditions in CSPs was identified much later.}. 
Intensive efforts to prove this conjecture, called the \emph{tractability conjecture} or the \emph{algebraic dichotomy conjecture}, have recently culminated in two independent proofs by Bulatov and Zhuk:

\begin{theorem}[\cite{Bul17},\cite{Zhu17}]
If a finite relational structure $\relstr{A}$ satisfies a nontrivial minor condition, then $\CSP(\relstr{A})$ is in P.
\end{theorem}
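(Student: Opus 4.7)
The plan is to exhibit a polynomial-time algorithm for $\CSP(\relstr{A})$ under the hypothesis that $\Pol(\relstr{A})$ satisfies some nontrivial minor condition. By Theorem~\ref{thm:third} one may replace $\relstr{A}$ by any structure with the same minor conditions, so the first step is a sequence of standard reductions: pass to a core of $\relstr{A}$, adjoin all singleton unary relations, and thereby reduce to the case where $\Pol(\relstr{A})$ consists of idempotent operations (those satisfying $f(x,\dots,x)=x$). Under idempotence, ``satisfies some nontrivial minor condition'' becomes equivalent to the existence of a Taylor polymorphism, which is the universal starting hypothesis of both the Bulatov and Zhuk proofs.

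With this setup in place, I would proceed by induction on $|A|$, combined with a structural analysis of the idempotent algebra $(A;\Pol(\relstr{A}))$. Two algorithmic building blocks are essential. The first is $(k,\ell)$-local consistency, which enforces constraints on small tuples of variables; it is known to decide $\CSP(\relstr{A})$ whenever $\Pol(\relstr{A})$ contains sufficiently strong ``near-unanimity-like'' operations and, more generally, whenever no quotient of the algebra is affine. The second is the few-subpowers algorithm, a Gaussian-elimination-style procedure that succeeds when $\Pol(\relstr{A})$ has a Maltsev-like polymorphism, since then solution spaces admit small generating sets. The structural task is to prove that every idempotent Taylor algebra either is solved directly by one of these two algorithms, or admits a nontrivial reducing object -- a proper absorbing subuniverse, a nontrivial congruence, or a central/linking relation -- whose exploitation yields a polynomial-time reduction to CSPs over strictly smaller or simpler structures whose polymorphism clones still satisfy a nontrivial minor condition, so that the induction closes.

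The hard part, and the reason the Bulatov and Zhuk proofs each run to hundreds of pages, is the interaction between the affine and the bounded-width regimes. In a purely affine algebra, Gaussian elimination suffices; in a purely non-affine algebra, local consistency suffices; but a generic Taylor algebra mixes both, and one must localize any inconsistency to an affine factor whose solutions are resolved linearly, while simultaneously ensuring that propagating the resulting constraints back through the instance preserves satisfiability. Bulatov controls this interaction through a refinement of tame congruence theory, working with ``coloured graphs'' of finite algebras and inducting over the types that appear; Zhuk replaces the coloured-graph machinery with a carefully developed theory of absorbing, central, and PC (polynomially complete) subuniverses, supplemented by explicit case analysis of small algebras. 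In either framework the decisive technical ingredient is a structure theorem asserting that in an idempotent Taylor algebra a ``nice'' reducing subuniverse or congruence is always available unless the algebra is already a module over a finite ring. Verifying this dichotomy in full generality, and showing that composing the reductions produces an overall polynomial-time algorithm, is where essentially all the difficulty of the theorem resides.
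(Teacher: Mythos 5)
The paper does not prove this theorem at all: it is quoted as a black box from Bulatov and Zhuk, and the author explicitly warns that the steps presented in the paper ``form only a tiny (but important) part of the theory.'' So there is no in-paper proof to measure you against, and the relevant question is whether your text stands on its own as a proof. It does not. What you have written is an accurate and well-informed \emph{survey of the architecture} of the two known proofs --- the reduction to an idempotent core, the equivalence with having a Taylor polymorphism, the two algorithmic engines (local consistency and few subpowers), the induction via absorbing subuniverses, congruences, and central/PC subuniverses, and the affine versus bounded-width tension --- but every load-bearing claim is stated rather than established. The ``structural task'' you name (that every idempotent Taylor algebra either falls to one of the two algorithms or admits a nontrivial reducing object), the claim that the reducts obtained still satisfy a nontrivial minor condition so the induction closes, and the claim that the composed reductions run in polynomial time are precisely the content of the theorem, and you defer all of them, saying yourself that this ``is where essentially all the difficulty of the theorem resides.''

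Two smaller points of accuracy in your outline: the bounded-width criterion is not ``no quotient of the algebra is affine'' but the absence of an affine \emph{factor of a subalgebra} (a quotient of a subuniverse), and the few-subpowers algorithm requires an edge or cube term, which is strictly weaker than having a Maltsev polymorphism --- ``Maltsev-like'' papers over a real gap, since a generic Taylor algebra has neither bounded width nor few subpowers and the whole difficulty is stitching the two regimes together instance by instance. As a roadmap for someone about to read Bulatov or Zhuk your paragraph is useful; as a proof it establishes nothing beyond what the citation already does.
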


Thus we now have a complete classification result: every finite structure either satisfies a nontrivial minor condition and then its CSP is in P, or it does not and its CSP is NP-complete.
The proofs of Bulatov and Zhuk are very complicated and it should be stressed out that the basic steps presented in this paper form only a tiny (but important) part of the theory. 

In fact, the third step did not impact on the resolution of the tractability conjecture for CSP over finite domains at all. However, it turned out to be significant for some generalizations of the CSP, including the generalization that we discuss in the next section, the Promise CSP.

\section{PCSP}

Many fixed-template CSPs, such as finding a 3-coloring of a graph or finding a satisfying assignment to a 3SAT formula, are hard computational problems. There are two ways how to relax the requirement on the
assignment in order to get a potentially simpler problem. The
first one is to require a specified fraction of the constraints to
be satisfied. For example, given a satisfiable 3SAT input,
is it easier to find an assignment satisfying at least $90\%$ of 
clauses? A celebrated result of H{\aa}stad \cite{Hst01}, which strengthens the famous PCP Theorem~\cite{ALMSS98,AS98}, 
proves that the answer is negative -- it is still an NP-complete problem. (Actually,
any fraction greater than 7/8 gives rise to an NP-complete
problem while the fraction 7/8 is achievable in polynomial
time.)

The second type of relaxation, the one we consider in this paper, is to require that a specified
weaker version of every constraint is satisfied. For example,
we want to find a 100-coloring of a 3-colorable graph, or
we want to find a valid $\CSP(3\relstr{NAE}_2)$ assignment to a true input of $\CSP(1\relstr{IN}3)$. 
This idea is formalized in the following definition.    

\begin{definition} \label{def:PCSP}
Let $\relstr{A} = (A; R_1^{\relstr{A}}, R_2^{\relstr{A}}, \dots, R_n^{\relstr{A}})$ 
and $\relstr{B} = (B; R_1^{\relstr{B}}, R_2^{\relstr{B}}, \dots, R_n^{\relstr{B}})$ be two similar finite relational structures (that is, $R^{\relstr{A}}$ and $R^{\relstr{B}}$ have the same arity for each $i$), and assume that there exists a homomorphism $\relstr{A} \to \relstr{B}$. Such a pair $(\relstr{A},\relstr{B})$ is refered to as a \emph{PCSP template}.

The PCSP over $(\relstr{A},\relstr{B})$, denoted $\PCSP(\relstr{A},\relstr{B})$, is the problem to distinguish, given a pp-sentence $\phi$ over the relational symbols $R_1, \dots, R_n$, between the cases that $\phi$ is true in $\relstr{A}$ (answer ``Yes'') and $\phi$ is not true in $\relstr{B}$ (answer ``No''). 
\end{definition}

For example, consider $\relstr{A} = (A; R^{\relstr{A}},S^{\relstr{A}})$ and $\relstr{B} = (B; R^{\relstr{B}},S^{\relstr{B}})$, where all the relations are binary. An input to $\PCSP(\relstr{A},\relstr{B})$ is, e.g., 
\[
(\exists x_1 \exists x_2 \dots \exists x_5) \ R(x_1,x_3) \wedge S(x_5,x_2) \wedge R(x_3,x_3).
\]
The algorithm should answer ``Yes'' if the sentence is true in $\relstr{A}$, i.e., the following sentence is true 
\[
(\exists x_1 \exists x_2 \dots \exists x_5) \ R^{\relstr{A}}(x_1,x_3) \wedge S^{\relstr{A}}(x_5,x_2) \wedge R^{\relstr{A}}(x_3,x_3),
\]
and the algorithm should answer ``No'' if the sentence is not true in $\relstr{B}$. In case that neither of the cases takes place, we do not have any requirements on the algorithm. Alternatively, we can say that the algorithm is promised that the input satisfies either ``Yes'' or ``No'' and it is required to decide which of these two cases takes place. 

Note that the assumption that $\relstr{A} \to \relstr{B}$ is necessary for the problem to make sense, otherwise, the  ``Yes'' and ``No'' cases would not be disjoint. Also observe that $\CSP(\relstr{A})$ is the same problem as $\PCSP(\relstr{A},\relstr{A})$. 

The search version of $\PCSP(\relstr{A},\relstr{B})$ is perhaps a bit more natural problem: the goal is to find a $\relstr{B}$-satisfying assignment given an $\relstr{A}$-satisfiable input. Unlike in the CSP, it is not known whether the search version can be harder than the decision version presented in Definition~\ref{def:PCSP}.

\subsection{Examples}

The examples below show that PCSPs are richer than CSP, both on the algorithmic and the hardness side. 

\begin{example} \label{ex:aprox_col}
Recall the structure $\relstr{K}_k$ from Example~\ref{ex:col} consisting of the inequality relation on a $k$-element set. 
For $k \leq l$, the PCSP over $(\relstr{K}_k,\relstr{K}_l)$ is the problem to distinguish between $k$-colorable graphs and graphs that are not even $l$-colorable (or, in the search version, the problem to find an $l$-coloring of a $k$-colorable graph). 

Unlike for the case $k=l$, the complexity of this problem for $3 \leq k < l$ is a notorious open question. It is conjectured that $\PCSP(\relstr{K}_k,\relstr{K}_l)$ is NP-hard for every $k<l$, but this conjecture was confirmed only in special cases: for $l \leq 2k-2$~\cite{BG16} (e.g., 4-coloring a 3-colorable graph) and for a large enough $k$ and $l \leq 2^{\Omega(k^{1/3})}$~\cite{Hua13}. 
The algebraic development discussed in the next subsection helped in improving the former result to $l \leq 2k-1$~\cite{BKO19} (e.g., 5-coloring a 3-colorable graph).
\end{example}

\begin{example} \label{ex:aprox_nae}
Recall the structure $3\relstr{NAE}_k$ from Example~\ref{ex:nae} consisting of the ternary not-all-equal relation on a $k$-element set.
For $k \leq l$, the PCSP over $(3\relstr{NAE}_k, 3\relstr{NAE}_l)$ is essentially the problem to distinguish between $k$-colorable 3-uniform hypergraphs and 3-uniform hypergraphs that are not even $l$-colorable. 

This problem is NP-hard for every $2 \leq k \leq l$~\cite{DRS05}, the proof uses strong tools, the PCP theorem and Lov\'asz's theorem on the chromatic number of Kneser's graphs~\cite{Lov78}.
\end{example}

\begin{example} \label{ex:one_nae}
Recall from Example~\ref{ex:one} that $1\relstr{IN}3$ denotes the structure on the domain $E_2$ with the ternary ``one-in-three'' relation $1IN3$. The PCSP over $(1\relstr{IN}3, 3\relstr{NAE}_2)$ is the problem to distinguish between $3$-uniform hypergraphs, which admit a coloring by colors $0$ and $1$ so that exactly one vertex in each hyperedge receives the color 1, and $3$-uniform hypergraphs that are not even $2$-colorable.

This problem, even its search version, admits elegant polynomial time algorithms~\cite{BG18,BG19} -- one is based on solving linear equations over the integers, another one on linear programming. For this specific template, the algorithm can be further simplified as follows.

We are given a 3-uniform hypergraph, which admits a coloring by colors $0$ and $1$ so that $(x,y,z) \in 1IN3$ for every hyperedge $\{x,y,z\}$, and we want to find a 2-coloring. We create a system of linear equations \emph{over the rationals} as follows: for each hyperedge $\{x,y,z\}$ we introduce the equation $x+y+z = 1$. By the assumption on the input hypergraph, this system is solvable in $\{0,1\} \subseteq \mathbb{Q}$ (in fact, $\{0,1\}$-solutions are the same as valid $1IN3$-assignements). Solving equations in $\{0,1\}$ is hard, but it is possible to solve the system in $\mathbb{Q} \setminus \{1/3\}$ in polynomial time by a simple adjustment of Gaussian elimination. Now we assign $1$ to a vertex $x$ if $x > 1/3$ in our rational solution, and $0$ otherwise. It is simple to see that we get a valid $2$-coloring.

Interestingly, to solve $\PCSP(1\relstr{IN}3, 3\relstr{NAE}_2)$, the presented algorithm uses a CSP over an \emph{infinite} structure, namely $(\mathbb{Q} \setminus \{1/3\};R)$, where $R = \{(x,y,z) \in (\mathbb{Q} \setminus \{1/3\})^3: x+y+z=1\}$. In fact,  the  infinity is necessary for this PCSP, see~\cite{Bar19} for a formal statement and a proof. 
\end{example}

\subsection{4th step: minor conditions!}

After the introduction of the PCSP framework, it has quickly turned out that both the notion of a polymorphism and Theorem~\ref{thm:first} have straightforward generalizations.

\begin{definition}
Let $\relstr{A} = (A; R_1^{\relstr{A}}, \dots)$ and $\relstr{B} = (B; R_1^{\relstr{B}}, \dots)$ be two similar relational structures. A $k$-ary (total) function $f: A^k \to B$ is a \emph{polymorphism} of $(\relstr{A},\relstr{B})$ if it is compatible with every pair  $(R_i^{\relstr{A}},R_i^{\relstr{B}})$, that is, for all tuples $\vc{r}_1, \dots, \vc{r}_k \in R_i^{\relstr{A}}$, the tuple $f(\vc{r}_1, \dots, \vc{r}_k)$  is in $R_i^{\relstr{B}}$. 

By $\Pol(\relstr{A},\relstr{B})$ we denote the set of all polymorphisms of $(\relstr{A},\relstr{B})$. 
\end{definition}

\begin{example}
For every $k$ which is not disible by 3, the  $k$-ary ``1/3-threshold'' function $f: E_2^k \to E_2$ defined by
$$
f(x_1, \dots, x_k) = \left\{ \begin{array}{ll}1 & \mbox{if} \sum x_i / k > 1/3\\ 0 & \mbox{else}\end{array}\right.
$$
is a polymorphism of the PCSP template $(1\relstr{IN}3, 3\relstr{NAE}_2)$ from Example~\ref{ex:one_nae}. Any PCSP whose template (over the domains $E_2$ and $E_2$) admits all these polymorphisms is in P~\cite{BG18,BG19}. 
\end{example}

\begin{theorem}[\cite{BG16a}]
The complexity of $\PCSP(\relstr{A},\relstr{B})$ depends only on $\Pol(\relstr{A},\relstr{B})$.
\end{theorem}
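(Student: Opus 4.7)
The plan is to follow the blueprint of Theorem~\ref{thm:first}: introduce a PCSP-appropriate notion of pp-definability, establish a Galois-type correspondence between such pp-definitions and polymorphism preservation, and then convert these pp-definitions into the desired log-space reduction. The key conceptual shift is that a ``definable object'' in the PCSP setting is a \emph{pair} $(P,Q)$ with $P \subseteq A^\ell$ and $Q \subseteq B^\ell$; I would call such a pair pp-definable from $(\relstr{A},\relstr{B})$ if there is a single pp-formula $\phi(x_1,\dots,x_\ell)$ in the common language with $P \subseteq \phi^{\relstr{A}}$ and $\phi^{\relstr{B}} \subseteq Q$. This asymmetric form is exactly what is needed to map ``Yes'' instances to ``Yes'' (using the first inclusion) and ``No'' instances to ``No'' (using the second). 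The easy direction of the correspondence is to verify that any pp-definable pair is preserved by $\Pol(\relstr{A},\relstr{B})$ in the expected sense: applying any $k$-ary polymorphism componentwise to $k$ tuples of $P$ lands in $Q$.

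The main technical step is the converse: if $\Pol(\relstr{A},\relstr{B}) \subseteq \Pol(\relstr{A}',\relstr{B}')$, then each relation pair $(R_i^{\relstr{A}'}, R_i^{\relstr{B}'})$ is pp-definable from $(\relstr{A},\relstr{B})$. I would prove this by a PCSP adaptation of the classical Geiger/BKKR indicator construction. Enumerate $R_i^{\relstr{A}'} = \{\vec p_1,\dots,\vec p_m\}$ with $\vec p_s \in A^r$ where $r = \arity{R_i}$, and for each $\ell \le r$ set $\vec q_\ell := (\vec p_1[\ell],\dots,\vec p_m[\ell]) \in A^m$. Build a pp-formula $\phi_i$ with variables $y_{\vec a}$ indexed by $\vec a \in A^m$ (existentially quantified except for the designated $y_{\vec q_1},\dots,y_{\vec q_r}$) and, for each $s$-ary relation symbol $R_j$ and each tuple $(\vec b_1,\dots,\vec b_s) \in (A^m)^s$ whose columns form an $m \times s$ matrix with every row in $R_j^{\relstr{A}}$, the conjunct $R_j(y_{\vec b_1},\dots,y_{\vec b_s})$. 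Interpreted in $\relstr{A}$, the solutions are exactly the tuples $(g(\vec q_1),\dots,g(\vec q_r))$ for $m$-ary $g \in \Pol(\relstr{A})$, hence $R_i^{\relstr{A}'} \subseteq \phi_i^{\relstr{A}}$ via the projections $g = \pi^m_s$. Interpreted in $\relstr{B}$, the solutions are exactly $(h(\vec q_1),\dots,h(\vec q_r))$ for $m$-ary $h \in \Pol(\relstr{A},\relstr{B})$; and by the hypothesis each such $h$ lies in $\Pol(\relstr{A}',\relstr{B}')$, so applying $h$ componentwise to $\vec p_1,\dots,\vec p_m \in R_i^{\relstr{A}'}$ lands in $R_i^{\relstr{B}'}$, giving $\phi_i^{\relstr{B}} \subseteq R_i^{\relstr{B}'}$.

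The reduction now writes itself: given an input pp-sentence $\psi$ over $(\relstr{A}',\relstr{B}')$, replace every atomic constraint $R_i(\vec x)$ by $\phi_i(\vec x)$ with fresh existential variables to obtain a pp-sentence $\psi'$ over $(\relstr{A},\relstr{B})$. A satisfying $\relstr{A}'$-assignment of $\psi$ extends, constraint by constraint, to one of $\psi'$ in $\relstr{A}$ using $R_i^{\relstr{A}'} \subseteq \phi_i^{\relstr{A}}$; and any satisfying $\relstr{B}$-assignment of $\psi'$ restricts to one of $\psi$ in $\relstr{B}'$ using $\phi_i^{\relstr{B}} \subseteq R_i^{\relstr{B}'}$. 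The per-constraint blow-up depends only on $|A|$ and the template, so the reduction is log-space. I expect the main subtlety to be pinning down the right asymmetric notion of pp-definability: the CSP Galois correspondence outputs a relation equal to a given one, whereas here we must arrange for the two interpretations of a \emph{single} pp-formula to sandwich the desired pair on the correct sides, and the indicator construction has to be set up with this sandwich in mind.
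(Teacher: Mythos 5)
Your proposal is correct and follows essentially the same route as the paper: the paper's one-line proof reduces $\PCSP(\relstr{A}',\relstr{B}')$ to $\PCSP(\relstr{A},\relstr{B})$ via pp-definability exactly as in Theorem~\ref{thm:first}, citing Pippenger's Galois correspondence for the step you prove by hand with the indicator construction adapted to relation pairs. Your explicit ``sandwich'' formulation $P \subseteq \phi^{\relstr{A}}$, $\phi^{\relstr{B}} \subseteq Q$ is precisely the right notion, and the verification is sound (modulo routine care with empty relations and coincident designated variables, handled by equality conjuncts).
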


\begin{proof}[sketch]
Proof is similar to Theorem~\ref{thm:first} using~\cite{Pip02} instead of~\cite{Gei68,BKKR69,BKKR69a}.
\end{proof}

Note that, in general, composition of polymorphisms is not even well-defined. Therefore the second step, considering strong Maltsev conditions satisfied by polymorphisms, does not make sense for PCSPs. However, minor conditions make perfect sense and they do capture the complexity of PCSPs, as proved in~\cite{BKO19}. Furthermore, the paper~\cite{BKO19} also provides an alternative proof by directly relating a PCSP to a computational problem concerning minor conditions!

\begin{theorem}[\cite{BKO19}] \label{thm:forth}
Let $(\relstr{A},\relstr{B})$ be a PCSP template and $\clone{M} = \Pol(\relstr{A},\relstr{B})$. 
The following computational problems are equivalent for every sufficiently large $N$.
\begin{itemize}
\item $\PCSP(\relstr{A},\relstr{B})$.
\item Distinguish, given a minor condition $\mc{C}$ whose function symbols have arity at most $N$, 
between the cases that $\mc{C}$ is trivial and $\mc{C}$ is not satisfied in $\clone{M}$.
\end{itemize}
\end{theorem}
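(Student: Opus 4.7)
The plan is to establish the equivalence via two polynomial-time reductions, both extracted from a single natural translation between pp-sentences and minor conditions. Setting $N := \max_i |R_i^{\relstr{A}}|$ will suffice; any larger $N$ is handled by padding function symbols with dummy arguments.

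The translation goes as follows. Given an input pp-sentence $\phi$ with variables $V$ and constraints $C_1, \dots, C_m$, where $C_j = R_{i_j}(v_{j,1}, \dots, v_{j,a_j})$, build a minor condition $\mc{C}_\phi$ by enumerating the $k_j := |R_{i_j}^{\relstr{A}}|$ tuples of each $R_{i_j}^{\relstr{A}}$ and introducing a function symbol $f_j$ of arity $k_j$. Each position $\ell$ of $C_j$ then yields a canonical vector $\pi^j_\ell \in A^{k_j}$ whose $t$-th entry is the $\ell$-th coordinate of the $t$-th tuple. Whenever the same variable appears at occurrences $(j, \ell)$ and $(j', \ell')$, add the equation $f_j(x_{\pi^j_\ell(1)}, \dots, x_{\pi^j_\ell(k_j)}) = f_{j'}(x_{\pi^{j'}_{\ell'}(1)}, \dots, x_{\pi^{j'}_{\ell'}(k_{j'})})$ over a universal pool $\{x_a : a \in A\}$. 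The reverse reduction, from minor conditions to pp-sentences, is essentially dual: each function symbol of arity $n$ gives rise to a fresh block of pp-variables indexed by $[n]$, and the minor equations become variable identifications.

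The correctness splits into two equivalences. For the YES case, $\phi$ is $\relstr{A}$-satisfiable iff $\mc{C}_\phi$ is trivial: a satisfying $s: V \to A$ yields the projection interpretation $f_j \mapsto \pi^{k_j}_{t_j}$ where $t_j$ is the index of $(s(v_{j,1}), \dots, s(v_{j,a_j})) \in R_{i_j}^{\relstr{A}}$, collapsing every equation to $x_{s(v)} = x_{s(v)}$; conversely, any projection interpretation reconstructs $s(v) := \pi^j_\ell(t_j)$, well-defined by the equations and $\relstr{A}$-valid by construction. For the NO case, I would show that $\phi$ is $\relstr{B}$-satisfiable iff $\mc{C}_\phi$ is satisfied in $\clone{M}$. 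The backward direction takes polymorphism interpretations $\{g_j\} \subseteq \clone{M}$ to the assignment $s(v) := g_j(\pi^j_\ell)$, which is well-defined by the equations and $\relstr{B}$-valid since $(g_j(\pi^j_1), \dots, g_j(\pi^j_{a_j})) \in R_{i_j}^{\relstr{B}}$ by the polymorphism property.

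The main obstacle will be the forward direction of the NO equivalence: given $s: V \to B$ satisfying $\phi$ in $\relstr{B}$, exhibit polymorphisms $g_j \in \clone{M}$ of arity $k_j$ whose action on the distinguished rows $\pi^j_\ell$ matches $s(v_{j,\ell})$ while the minor equations hold pointwise on the whole universal pool $A^A$. The catch is that the equations demand global agreement, not merely agreement on the special points, so the $g_j$ cannot be chosen independently; they must instead be coupled minors of a common ``umbrella'' polymorphism dictated by the incidence structure of $\phi$. My plan is to exploit that $\Pol(\relstr{A}, \relstr{B})$ is a minion closed under taking minors, and to build the family from a sufficiently high-arity polymorphism derived from $s$ together with a fixed homomorphism $\relstr{A} \to \relstr{B}$, then read off each $g_j$ as the appropriate minor. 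Carrying this construction through so that compatibility holds across all relations and all constraints simultaneously is the technical heart of the argument.
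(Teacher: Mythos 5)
Your forward reduction (pp-sentence $\phi \mapsto \mc{C}_\phi$) is essentially the paper's construction, just with the per-variable symbols $g_a$ of arity $|A|$ inlined away by equating minors of constraint symbols directly; that works. But you have misdiagnosed where the work lies. For a reduction between \emph{promise} problems you only need: (YES) $\phi$ true in $\relstr{A}$ $\Rightarrow$ $\mc{C}_\phi$ trivial, and (NO, contrapositive) $\mc{C}_\phi$ satisfied in $\clone{M}$ $\Rightarrow$ $\phi$ true in $\relstr{B}$. You prove both of these already. The ``forward direction of the NO equivalence'' that you call the technical heart --- from a $\relstr{B}$-satisfying assignment, build coupled polymorphisms satisfying $\mc{C}_\phi$ --- is not needed for correctness, and it is false in general: satisfaction of $\mc{C}_\phi$ in $\Pol(\relstr{A},\relstr{B})$ corresponds to $\phi$ mapping homomorphically into a structure built from $\clone{M}$ that is typically strictly smaller than $\relstr{B}$ in the homomorphism order. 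Inputs that are $\relstr{B}$-satisfiable but not $\relstr{A}$-satisfiable fall outside the promise, so the reduction owes nothing about them. Abandon that step.

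The genuine gap is in the reverse reduction, which you dispatch in one sentence and get wrong. A function symbol $f$ of arity $n$ must give rise to a block of pp-variables indexed by $A^n$ --- the entries of the \emph{table} of a candidate function $A^n \to B$ --- not by $[n]$; with only $n$ variables per symbol there is nothing to encode. You also omit the essential constraints: for each relation $R_i$ and each $n$-tuple of tuples from $R_i^{\relstr{A}}$, one must impose $R_i$ on the corresponding table entries, which is exactly how ``$f$ is a polymorphism of $(\relstr{A},\relstr{B})$'' becomes a conjunction of constraints (the minor identities then become identifications of table variables). Without these constraints the resulting pp-sentence carries no information about $\clone{M}$, and the correctness argument --- trivial condition $\Rightarrow$ projection tables give an $\relstr{A}$-satisfying assignment; $\relstr{B}$-satisfying assignment $\Rightarrow$ the tables are polymorphisms witnessing satisfaction in $\clone{M}$ --- cannot even be stated. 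This is the half of the equivalence your proposal does not yet contain.
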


\begin{proof}[sketch]
The reduction from $\PCSP(\relstr{A},\relstr{B})$ to the second problem works as follows. Given an input to the PCSP we introduce one $|A|$-ary function symbol $g_a$ for each variable $a$ and one $|R^{\relstr{A}}|$-ary function symbol $f_C$ for each conjunct $R(\dots)$. The way to build a minor condition is quite natural, for example, the input
$$
(\exists a \exists b \exists c \exists d) \  R(c,a,b) \wedge R(a,d,c) \wedge \dots
$$
to $\PCSP(1\relstr{IN}3, 3\relstr{NAE}_2)$ is transformed to the minor condition
\begin{align*}
f_1(x_1,x_0,x_0) &= g_c(x_0,x_1) \\
f_1(x_0,x_1,x_0) &= g_a(x_0,x_1) \\
f_1(x_0,x_0,x_1) &= g_b(x_0,x_1) \\
\smallskip \\
f_2(x_1,x_0,x_0) &= g_a(x_0,x_1) \\
f_2(x_0,x_1,x_0) &= g_d(x_0,x_1) \\
f_2(x_0,x_0,x_1) &= g_c(x_0,x_1) \\
\dots
\end{align*}
It is easy to see that a sentence that is true in $\relstr{A}$ is transformed to a trivial minor condition.
On the other hand, if the minor condition is satisfied in $\clone{M}$, say by the functions denoted $f'_1,f'_2, g'_a, \dots$, then the mapping
$a \mapsto g'_a(0,1)$, $b \mapsto g'_b(0,1)$, \dots gives a $\relstr{B}$-satisfying assignment of the sentence -- this can be deduced from the fact that $f'$s and $g'$ are polymorphisms.

The reduction in the other direction is based on the idea that the question ``Is this minor condition satisfied by polymorphisms of $\relstr{A}$?'' can be interpreted as an input to $\CSP(\relstr{A})$. The main ingredient is to look at functions as tuples (their tables); then ``$f$ is a polymorphism'' translates to a conjunction, and equations can be simulated by merging variables. 
\end{proof}

Theorem~\ref{thm:forth} implies Theorem~\ref{thm:third} (and its generalization to PCSPs) since
the computational problem in the second item clearly only depends on minor conditions satisfied in $\clone{M}$. The proof sketched above 
\begin{itemize}
\item is simple and does not (explicitly) use any other  results, such as the correspondence between polymorphisms and pp-definitions used in Theorem~\ref{thm:first} or the Birkhoff HSP theorem used in Theorem~\ref{thm:second}, and
\item is based on constructions which have already appeared, in some form, in several contexts; in particular, the second item is related to important problems in approximation, versions of the Label Cover problem (see~\cite{BKO19,BBKO}). 
\end{itemize}
The theorem and its proof are simple, nevertheless, very useful. For example, the hardness of $\PCSP(\relstr{K}_k,\relstr{K}_{2k-1})$ mentioned in Example~\ref{ex:aprox_col} was proved in~\cite{BKO19} by showing that every minor condition satisfied in $\Pol(\relstr{K}_k,\relstr{K}_{2k-1})$ is satisfied in $\Pol(3\relstr{NAE}_2,3\relstr{NAE}_l)$ (for some $l$) and then using the NP-hardness of $\PCSP(3\relstr{NAE}_2,3\relstr{NAE}_l)$ proved in~\cite{DRS05} (see Example~\ref{ex:aprox_nae}).

\section{Conclusion}

The PCSP framework is much richer than the CSP framework; on the other hand, the basics of the algebraic theory generalize from CSP to PCSP, as shown in Theorem~\ref{thm:forth}. Strikingly, the computational problems in Theorem~\ref{thm:forth} are (promise and restricted versions) of two ``similar'' problems:
\begin{itemize}
\item[(i)] Given a structure $\struct{A}$ and a first-order sentence $\phi$ over the same signature, decide whether $\struct{A}$ satisfies $\phi$.
\item[(ii)] Given a structure $\struct{A}$ and a first-order sentence $\phi$ in a different signature, decide whether symbols in $\phi$ can be interpreted in $\struct{A}$ so that $\struct{A}$ satisfies $\phi$.
\end{itemize}
Indeed, $\CSP(\relstr{A})$ is the problem (i) with $\struct{A}$ a fixed relational structure and $\phi$ a pp-sentence (and PCSP is a promise version of this problem), whereas a promise version of the problem (ii) restricted to a fixed $\struct{A}$ of purely functional signature and universally quantified conjunctive first-order sentences $\phi$ is the second item in Theorem~\ref{thm:forth}. Variants of problem (i) appear in many contexts throughout Computer Science. What about problem (ii)?

%
%
%

\bibliographystyle{plain}
\bibliography{Libor_FCT}

\end{document}